\DeclareRobustCommand{\uvec}[1]{{%
		\ifcsname uvec#1\endcsname
		\csname uvec#1\endcsname
		\else
		\bm{\mathbf{#1}}%
		\fi
}}
\theoremstyle{plain}
\newtheorem{prop}{Proposition}
\newtheorem{rem}{Remark}
\date{March 2021}
\newcommand{\makehighlight}[1]{\textcolor{black}{#1}}
\begin{document}

% \author{
% \IEEEauthorblockN{Seyed Amir Tafrishi$^{1}$, Xiaotian Dai$^{2}$, Alan Burns$^{3}$}
% \IEEEauthorblockA{$^{1}$Department of Mechanical Engineering, Kyushu University, Kyushu, Japan\\ $^{2}$ }
% }

\title{Discretization and Stabilization of Energy-Based Controller for Period Switching Control and Flexible Scheduling}

\author{%
Seyed Amir Tafrishi$^{1}$, Xiaotian Dai$^{2}$, Yasuhisa Hirata$^{1}$ and Alan Burns$^{2}$% <-this % stops a space
\thanks{$^{1}$ S. A. Tafrishi and Y. Hirata are with Department of Robotics, Tohoku University, 6-6-01 Aramaki-Aoba, Aoba-ku, Sendai 980-8579, Japan.
        {\tt\small \{s.a.tafrishi, hirata\}@srd.mech.tohoku.ac.jp}}%
\thanks{$^{2}$ X. Dai and A. Burns are with the Department of Computer Science, University of York, United Kingdom.
        {\tt\small \{xiaotian.dai, alan.burns\}@york.ac.uk}}%
}

\maketitle

\begin{abstract}
Emerging advanced control applications, with increased complexity in software but limited computing resources, suggest that real-time controllers should have adaptable designs. These control strategies also should be designed with consideration of the run-time behavior of the system. One of such research attempts is to design the controller along with the task scheduler, known as control-scheduling co-design, for more predictable timing behavior as well as surviving system overloads. Unlike traditional controller designs, which have equal-distance sampling periods, the co-design approach increases the system flexibility and resilience by explicitly considering timing properties, for example using an event-based controller or with multiple sampling times (non-uniform sampling and control). 
Within this context, we introduce the first work on the discretization of an energy-based controller that can switch arbitrarily between multiple periods and adjust the control parameters accordingly without destabilizing the system. A digital controller design based on this paradigm for a DC motor with an elastic load as an example is introduced and the stability condition is given based on the proposed Lyapunov function. The method is evaluated with various computer-based simulations which demonstrate its effectiveness.  
\end{abstract}

% -------------------------------------------------------------------------------------------------------
\section{Introduction} \label{sec:introduction}
% ---------------------------------------------------------------------------------------------------
% background of scheduling: control scheduling co-design and flexible scheduling
\makehighlight{Utilising computational resources efficiently is important for emerging systems that have complex functionalities but with limited on-board computing resources, for example, autonomous mobile robots and self-driving vehicles.}
In the control systems community, traditional controller design rarely takes computing resources into consideration. For instance, in computer-based control, the discrete controller is often designed with the assumption that the control task, i.e. the program that runs the controller code on an embedded computer, is executed with equal-distance sampling and control intervals. Also, the discrete controller is directly derived based on the period from the continuous controller design. These, however, (a) limit the flexibility of software and scheduling; and (b) degrade the control performance if the actual sampling/control intervals are different.

From the task scheduling perspective, historically, tasks are scheduled with more deterministic policies, e.g. \emph{cyclic executives} \cite{burns2001real, burns2015cyclic}. \makehighlight{These policies have less flexibility but with more timing predictability that the real-time controller can be executed with a relatively stable sampling time and jitter. However, as the computing architecture becomes more complicated and demanding for more functionalities, the execution environment soon becomes more dynamic. This issue brings both challenges and opportunities for the design of digital controllers.}
In recent years, there is ongoing research to support \emph{control scheduling co-design} \cite{dai2019dual, arzen2000introduction}, in which the computation resources are taken into consideration explicitly during the design of the real-time controller.
Pioneering works have shown that event-based control is comparable to periodic control in \cite{astrom2008event, lehmann2011event, wang2017dynamic, lunze2010state} that can significantly save computing and energy resources.

Recently, it has been realized in both the control and scheduling communities that real-time controllers can run at different frequencies (and hence so does the control task that executes the control-related functions) \cite{cervin2000feedback, dai2020period}. This brings opportunities for real-time embedded control systems to adapt the control period in accordance with the current system workload to satisfy task schedulability and avoid overloading the system. Note that we assume sampling time (the time interval in which the controller updates its current states; which is not necessarily equal to the actual sampling rate of the sensor) equals the control period; hence, we use \emph{control period} and \emph{sampling period} interchangeably. We also use the term \emph{interval} and \emph{period} interchangeably as the semantics are identical in this paper.

Unlike the more traditional mode switching control, period switching focuses on adapting different control periods, rather than switching between a set of controllers with pre-designed tuned parameters. As the task utilization is equal to its worst-case execution time divided by its inter-release interval (task period), adapting the interval can have a significant impact on computing resource usage and the overall system's energy consumption.

% ---------------------------------------------------------------------------------------------------
% energy-based control
\makehighlight{ On the other hand, energy-based control \cite{fantoni2000energy,dahleh2004lectures,9382723} is not explored as much as \emph{proportional-integral-derivative} (PID) control \cite{aastrom2001future,johnson2005pid} or state feedback control \cite{4659500,sevim2016stabilization}. With physical systems, energy-based control is often more intuitive as it explores the internal dynamics of the physical process. It supports non-linear systems as well as underactuated systems \cite{847110} and has large regional convergence compared to controllers designed with pole-placement. It can naturally support non-uniform sampling and it should be easier to stabilize in this case. We think that because the convergence of the physical system can be derived with dependency on sampling time, it should be easier to develop stable control strategies for these time-varying systems. Also, this fact was confirmed in insightful studies of motor systems with flexible loads \cite{spong1996energy,gao2009energy,ujjal2021WAC} where an energy-based control strategy was able to stabilize the whole system under physical/virtual constraints.}

%Controllers with non-uniform sampling can be important in physical systems \cite{sevim2016stabilization}. For example, DC motors more often work with continues-time controllers that chops the voltage by \emph{Pulse Width Modulation} (PWM) methods \cite{holtz1992pulsewidth,barr2001pulse,edpuganti2014new}. However, the PWM method can result in considerable voltage spikes in high-frequency controls \cite{pindoriya2018analysis,9102413}. Additionally, there are large energy losses in high-frequency PWM. This also can result in different magnetic noise and disturbances. Thus, developing a discrete-time controller can improve the response, save energy and minimise the anomalies when sampling frequencies are altered.

% contributions & paper summary
To fulfill the requirement of control-scheduling co-design, a vital step is to discretize the real-time controller, which is to transform the system dynamics from continuous-time to discrete-time as a function of the sampling period. 
Controllers with non-uniform sampling can be important in physical systems \cite{sevim2016stabilization}, and developing a discrete-time controller can improve the response, save energy and minimise the anomalies when sampling frequencies are altered.
Historically, the energy-based controller has largely been addressed in the continuous-time domain while the transformation into discrete-time is not studied that much. \makehighlight{For example, pioneering research was developed as the first discretized energy-based controller for an underactuated system with variable disturbances \cite{franco2018discrete}.} 
In our work, we leverage the control-scheduling co-design by providing discretization of energy-based control, that can be used for the purpose of robust, adaptive and flexible task scheduling \cite{dai2018flexible}.

% paper organisation
This paper is organized as follows: we develop the DC motor model with elastic load and its discretization in Section~\ref{sec:controller}. Next, we introduce the design of a discrete energy-based controller in Section~\ref{sec:switch} and derive its stability conditions. An evaluation with simulation is given in Section~\ref{sec:evaluation}, followed by concluding remarks and a discussion of future work in Section~\ref{sec:conclusion}.

% -------------------------------------------------------------------------------------------------------
\section{Energy-based Controller Design} \label{sec:controller}
In this section, we first present the model of the DC motor with an elastic load. Then, we demonstrate the discretized model with its general relations due to the Maclaurin series approximation.
% I would prefer 'sampling time' or 'sampling period'.

\subsection{Model of a DC Motor System}
We consider a DC motor model with attached elastic load, stiffness $K_L$ and viscous friction $B_L$, as shown in Fig.~\ref{Fig:Motor_DC}. 
The equations of motion for this DC motor are \cite{ruderman2008optimal,Ujjal2021,ujjal2021WAC}:
%https://ww2.mathworks.cn/help/control/getstart/linear-lti-models.html
\begin{align}
\label{Eq:StateEquationsMot}
&\dot{\uvec{x}}=  \uvec{A} \uvec{x}+ \uvec{B} u, \\
&\uvec{y}=\uvec{C}\uvec{x},
\end{align}
where 
\begin{equation}
\begin{split}
\uvec{A}=\left[\begin{array}{ccc}
-\frac{R}{L} & -\frac{K_b}{L} & 0 \\
\frac{K_m}{J}  & -\frac{B}{J} & -\frac{K_L}{J}\\
0 & 0 & 1 
\end{array}\right], \uvec{B}=\left[\begin{array}{c}
-\frac{1}{L}  \\
0\\
0
\end{array}\right], 
\uvec{C}=\left[\begin{array}{c}
1  \\
1 \\
1
\end{array}\right] 
\end{split}
\label{Eq:Dynamicmotormodel}
\end{equation}
where $\uvec{x}=[I\;\; \dot{\theta}\;\;\theta]^T$, $u$, $L$, $R$, $K_m$ and \makehighlight{$K_b$} are the DC motor current and angular velocity and displacement states, the input voltage as the control input, the armature inductance and the armature resistance, back emf constant \makehighlight{and motor viscous friction}, respectively. Also, the system's inertia $J=J_m+J_L$ encompasses the motor $J_m$ and load $J_L$ inertia and viscous friction $B=B_m+B_L$ are motor $B_m$ and load $B_L$ viscous frictions. 

\begin{figure}[t!]
    \centering
    \includegraphics[width=2.2 in]{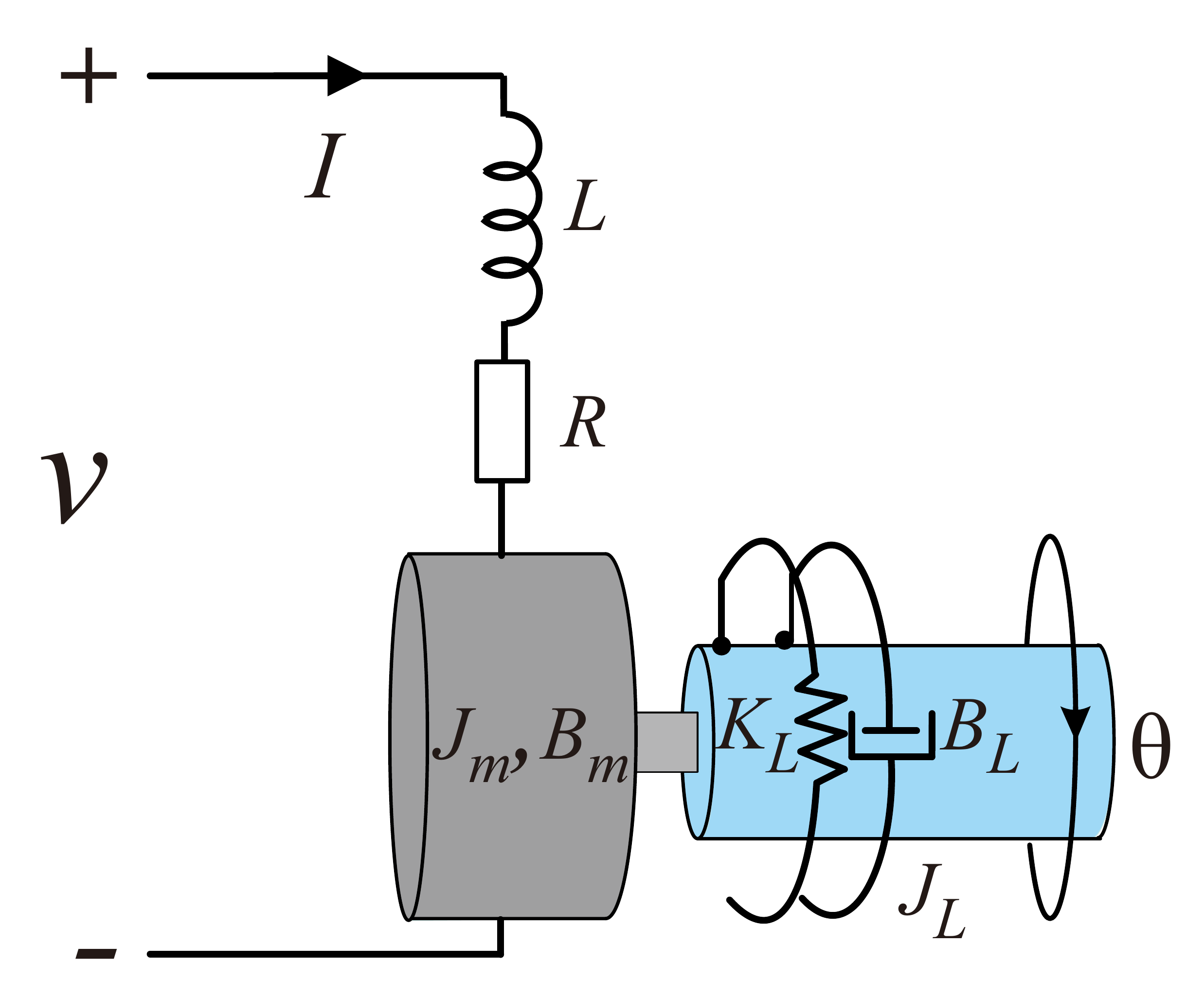}
	\caption{The DC motor model with flexible load.} 
	\label{Fig:Motor_DC}
\end{figure}

\subsection{Discrete State-Space Model}
The discretized model of (\ref{Eq:Dynamicmotormodel}) is presented as \cite{van1978computing} 
\begin{equation}
    \uvec{x}_{k+1}=\uvec{F}_k\uvec{x}_k+\uvec{G}_k \uvec{u}_k,
    \label{Eq:Discretemodelstate}
\end{equation}
where
\begin{equation}
    \uvec{F}_k\coloneqq \bm{e}^{\uvec{A}h_k},   \;\; \uvec{G}_k\coloneqq  \sideset{}{_{0}^{h_k}} \intop \bm{e}^{\uvec{A}h_k}\;  \uvec{B} \; d \tau   , 
    \label{Eq:DiscereteFkGK}
\end{equation}
where $h_k=t_k-t_{k-1}$ is the sampling time at $k$-th sample. 
The reason to use an exponential model rather than $z$-transformation is to facilitate the inclusion of the sampling time as a variable in the time domain.
Then, we define exponential matrix operations for (\ref{Eq:Discretemodelstate})-(\ref{Eq:DiscereteFkGK}) by using the Maclaurin series as follows
\begin{equation}
 \bm{\Phi}(\uvec{A}) \triangleq \sum^{\infty}_{i=0} \frac{\uvec{A}^i}{(i+1)!}= \uvec{I}+\frac{\uvec{A}}{2!}+\frac{\uvec{A}^2}{3!}+...
 \label{Eq:maclurtianexponetialexp}
\end{equation}
where we can have the following equality relations derived from (\ref{Eq:maclurtianexponetialexp})
\begin{align}
\label{Eq:Relationmultipli}
&(i)\;\;\;\;\uvec{A} \bm{\Phi}(\uvec{A}) = \bm{\Phi}(\uvec{A})\uvec{A}, \\
&(ii)\;\;\;\;\bm{e}^{\uvec{A}}=\uvec{I}+\uvec{A}\bm{\Phi}(\uvec{A}),\\
&(iii)\sideset{}{_{0}^{h}} \intop \bm{e}^{\uvec{A}h} d \tau = h \bm{\Phi}(\uvec{A}h), \;\;\;\; \\
& (iv) \; \bm{\Phi}(\uvec{T}^{-1}\uvec{A}\uvec{T})=\uvec{T}^{-1}\bm{\Phi}(\uvec{A})\uvec{T}\; \textnormal{for an arbitrary} \nonumber \\
&\textnormal{invertible\;\uvec{T}},
\label{Eq:Relationmultilast}
\end{align}
where $\uvec{A}\in\mathbb{R}^{n\times n}$ and $h\in\mathbb{R}$ are arbitrary variables in here.
The proof of Eqs. (\ref{Eq:Relationmultipli})-(\ref{Eq:Relationmultilast}) can be found in \cite{sevim2016stabilization}.
% ------------------------------------------------------------------------------------------------------- Energy Equations

\section{Arbitrary Switching with Energy-based Controller} \label{sec:switch}
In this section, we derive a discrete energy-based controller for an arbitrary switching mode.
Also, \makehighlight{we explain how certain controller gains get updated based on arbitrary sampling time to improve the convergence while the system is maintained stable.}

In this control problem, we assume that system (\ref{Eq:Discretemodelstate}) with states $\uvec{x}_k$ is controlled to converge toward the desired states $\uvec{x}_d$. \makehighlight{The system (\ref{Eq:Discretemodelstate}) is assumed to have a random sampling time $h_k$ that changes continuously.} The aim is to keep the system stable during the converges with changing sampling time $h_k$. Now, we develop the following proposition for our energy-based controller with stable converges as follows:

\begin{prop}
Assume the discrete system defined by Eqs.~(\ref{Eq:Discretemodelstate})-(\ref{Eq:DiscereteFkGK}) has a random sampling time $h_k$ in $k$-th sample. Then, the physical system converges to the desired states $\uvec{x}_d$ with the following input
\makehighlight{
\begin{align}
    &u_k= -\left[ k_E\left(\frac{1}{2}\uvec{x}^T_k\uvec{D}\uvec{x}_k \right)\uvec{x}^T_k\uvec{D}\bm{\Phi}(\uvec{A}h_k)\uvec{B} \right]^{-1} \nonumber\\
    &\cdot \Bigg[\Bigg(k_E\left(\frac{1}{2}\uvec{x}^T_k\uvec{D}\uvec{x}_k \right)\uvec{x}^T_k\uvec{D} \Bigg)\bm{\Phi}(\uvec{A h_k})\uvec{A}\uvec{x}_k  \nonumber\\
    &  +\frac{k_D}{h_k}\left(\dot{\theta}_k-\dot{\theta}_d\right) \left( \uvec{F}_m \uvec{x}_k -\dot{\theta}_k\right) +k_P\left( \theta_k-\theta_d\right)\dot{\theta}_k\Bigg],
    \label{Eq:ControllerinputEPropo}
\end{align}}
while the following condition is satisfied for stability\makehighlight{
\begin{align}
    &k_E \left(\frac{1}{2}\uvec{x}^T_k\uvec{D}\uvec{x}_k \right)  \uvec{x}_k^T\uvec{D} \bm{\Phi}(\uvec{A} h_k)\uvec{B}u_k \nonumber \\
    &+k_P\left( \theta_k-\theta_d\right)\dot{\theta}_k \leq - k_E E_k \; \uvec{x}_k^T\uvec{D} \bm{\Phi}(\uvec{A}h_k)\uvec{A}\uvec{x}_k   \nonumber\\
    &-\frac{k_D}{h_k}\left(\dot{\theta}_k-\dot{\theta}_d\right) \left(\uvec{F}_m \uvec{x}_k-\dot{\theta}_k\right),
    \label{Eq:DeriLyapunovreordered}
\end{align}
where $\uvec{F}_m$ is the sub-matrix of $\uvec{F}$ that belongs to the model of the rotational body (armature) of the DC motor and $k_E,k_P$ and $k_D$ are the controller gains with
\begin{equation*}
    \uvec{D}=\left[\begin{array}{ccc}
L &  0 &0\\
 0 & J &0 \\ 
 0 & 0 & K_L
\end{array}\right].
\end{equation*}
Also, the controller should stay stable with the inclusion of boundary conditions for the sampling time $h_k$
\begin{align}
\begin{cases}
   \begin{split}
    & 
    k_D\left(\dot{\theta}_k-\dot{\theta}_d\right)  \left(\uvec{F}_m \uvec{x}_k-\dot{\theta}_k\right)\leq 0    \end{split} \hspace{.56 in} h_k\rightarrow0 \\
    &  - \uvec{A}\uvec{x}_k \leq \uvec{B}u_k \hspace{1.64 in} h_k \rightarrow \infty \\
    \end{cases}
     \label{Eq:DeriLyapunovreorderedlowhintext}
\end{align}}

\end{prop}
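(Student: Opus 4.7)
The plan is to build a discrete-time Lyapunov function that marries the total stored energy of the electromechanical system with quadratic tracking penalties on the mechanical coordinate and its rate, and then to carve the resulting first difference into exactly the structure appearing in (\ref{Eq:ControllerinputEPropo}) and (\ref{Eq:DeriLyapunovreordered}). The natural energy candidate is $E_k = \tfrac{1}{2}\uvec{x}_k^T\uvec{D}\uvec{x}_k$, which with the given $\uvec{D}=\operatorname{diag}(L,J,K_L)$ is precisely the sum of inductor, kinetic and elastic energies; this motivates the Lyapunov function
\begin{equation*}
V_k = \tfrac{k_E}{2}E_k^2 + \tfrac{k_P}{2}(\theta_k-\theta_d)^2 + \tfrac{k_D}{2}(\dot{\theta}_k-\dot{\theta}_d)^2 .
\end{equation*}
Convergence will follow from $\Delta V_k := V_{k+1}-V_k \le 0$, so the whole task reduces to computing $\Delta V_k$ along the discrete flow (\ref{Eq:Discretemodelstate}) and identifying conditions on $u_k$ that guarantee nonpositivity.

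To compute $\Delta E_k$, I would substitute $\uvec{x}_{k+1}=\uvec{F}_k\uvec{x}_k+\uvec{G}_k u_k$ with $\uvec{F}_k=\uvec{I}+\uvec{A}h_k\bm{\Phi}(\uvec{A}h_k)$ and $\uvec{G}_k=h_k\bm{\Phi}(\uvec{A}h_k)\uvec{B}$ obtained from items (ii) and (iii) of (\ref{Eq:Relationmultipli})--(\ref{Eq:Relationmultilast}). Using the commutation identity (i), the first-order part collapses to
\begin{equation*}
\Delta E_k \;\simeq\; h_k\,\uvec{x}_k^T\uvec{D}\bm{\Phi}(\uvec{A}h_k)\bigl[\uvec{A}\uvec{x}_k+\uvec{B}u_k\bigr].
\end{equation*}
The angular-position increment is $\theta_{k+1}-\theta_k\simeq h_k\dot{\theta}_k$ by construction of the $\theta$-row of $\uvec{A}$, while the velocity increment is $\dot{\theta}_{k+1}-\dot{\theta}_k = \uvec{F}_m\uvec{x}_k-\dot{\theta}_k$ (plus the small contribution from $\uvec{G}_m u_k$ that will be absorbed in the higher-order remainder). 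Expanding $V_{k+1}$ and dividing the leading part by $h_k>0$, the dominant contribution to $\Delta V_k/h_k$ is exactly
\begin{equation*}
k_E E_k\uvec{x}_k^T\uvec{D}\bm{\Phi}(\uvec{A}h_k)\bigl[\uvec{A}\uvec{x}_k+\uvec{B}u_k\bigr] + k_P(\theta_k-\theta_d)\dot{\theta}_k + \tfrac{k_D}{h_k}(\dot{\theta}_k-\dot{\theta}_d)(\uvec{F}_m\uvec{x}_k-\dot{\theta}_k),
\end{equation*}
which is precisely the rearranged form of (\ref{Eq:DeriLyapunovreordered}).

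At this point the controller derivation and the stability condition become two sides of the same inequality. Demanding this bracket vanish gives a scalar linear equation in $u_k$ with coefficient $k_E E_k\uvec{x}_k^T\uvec{D}\bm{\Phi}(\uvec{A}h_k)\uvec{B}$; inverting yields (\ref{Eq:ControllerinputEPropo}) verbatim. Relaxing the equality to $\le 0$ reproduces (\ref{Eq:DeriLyapunovreordered}), after which $V_k$ is nonincreasing, bounded below by zero, and therefore $E_k$, $\theta_k-\theta_d$ and $\dot{\theta}_k-\dot{\theta}_d$ are driven to the desired equilibrium, which is the claimed convergence.

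The boundary conditions in (\ref{Eq:DeriLyapunovreorderedlowhintext}) are then read off by inspecting which term governs $\Delta V_k$ in each regime. As $h_k\to 0$, $\bm{\Phi}(\uvec{A}h_k)\to\uvec{I}$ and the $k_E$ and $k_P$ pieces vanish to leading order, so nonpositivity is controlled solely by the $\tfrac{k_D}{h_k}$-weighted term, yielding $k_D(\dot{\theta}_k-\dot{\theta}_d)(\uvec{F}_m\uvec{x}_k-\dot{\theta}_k)\le 0$. As $h_k\to\infty$, the series defining $\bm{\Phi}(\uvec{A}h_k)$ is dominated by its high-order terms, so $\Delta E_k$ is dictated by whether the actuation can overcome the drift, forcing $-\uvec{A}\uvec{x}_k\le\uvec{B}u_k$. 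I expect the main obstacle to be a clean accounting of the discarded second-order remainder, in particular the term $\tfrac{k_E}{2}(\Delta E_k)^2\sim h_k^2$ together with the $\uvec{G}_m u_k$ contribution to $\dot{\theta}_{k+1}-\dot{\theta}_k$; these must be shown not to flip the sign of the leading inequality for admissible $h_k$, which is precisely what the two boundary conditions enforce and why they appear as separate hypotheses rather than being implied by (\ref{Eq:DeriLyapunovreordered}) alone.
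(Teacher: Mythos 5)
Your proposal follows essentially the same route as the paper: the same Lyapunov candidate $V_k = \tfrac{k_E}{2}E_k^2 + \tfrac{k_P}{2}(\theta_k-\theta_d)^2+\tfrac{k_D}{2}(\dot{\theta}_k-\dot{\theta}_d)^2$, the same first-order forward difference of $E_k$ and $V_k$ along the discrete flow, the controller obtained by zeroing the resulting bracket and the stability condition by relaxing it to $\le 0$, with the boundary conditions read off from the $h_k\to 0$ and $h_k\to\infty$ limits. Your explicit flagging of the discarded quadratic remainder in $\Delta E_k$ and of the $\uvec{G}_k u_k$ contribution to the velocity increment is, if anything, more careful than the paper, which simply defines those differences to first order.
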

%\lipsum[1]
\begin{proof}
The energy equation consisting the kinetic and potential energies, for the DC motor with elastic load system is as follows
%\begin{equation}
%    E=U+T=\frac{1}{2}J\dot{\theta}^2+\frac{1}{2}K\theta^2
%    \label{Eq:Energyequation1}
%\end{equation}
\begin{equation}
   E=\frac{1}{2}\uvec{x}^T \uvec{D}\uvec{x}.
    \label{Eq:Energyequation1}
\end{equation}
Then, the derivative of energy (\ref{Eq:Energyequation1}) in the time domain is given by
\begin{eqnarray}
    \dot{E} =\uvec{x}^T \uvec{D}\dot{\uvec{x}}.
\end{eqnarray}
%Next, if we substitute state-equations (\ref{Eq:StateEquationsMot})-(\ref{Eq:Dynamicmotormodel}) based on $\dot{I}$ and $\ddot{\theta}$ into (\ref{Eq:derivativeenergy}), one gets
%\begin{eqnarray}
%    \dot{E}&=&IL\Big[-\left(\frac{K_m}{J}\right)I-\left(\frac{K_f}{J}\right)\dot{\theta}\Big]\nonumber\\
%    &+ &\dot{\theta}J[-\left(\frac{R}{L}\right)I-\left(\frac{K_b}{L}\right)\dot{\theta}-\frac{1}{L}u]+\theta K \dot{\theta}
%\end{eqnarray}
Now, the new discrete energy equation that encompasses the discrete state-space model using (\ref{Eq:Discretemodelstate})-(\ref{Eq:DiscereteFkGK}) with sampling period $h_k$ \makehighlight{is defined by the first-order forward difference of Taylor expansions \cite{strauss2007partial,meena2020discretization}}
%https://en.wikipedia.org/wiki/Finite_difference
%https://math.stackexchange.com/questions/2120946/why-is-central-difference-preferred-over-backward-and-forward-difference-in-conv
\makehighlight{
\begin{align}
    &  E'=\frac{1}{h_k}\left(E_{k+1}-E_k \right) \coloneqq  \uvec{x}^T_{k}\uvec{D}\frac{1}{h_k}(\uvec{x}_{k+1}-\uvec{x}_{k}) \nonumber \\
    & =  \uvec{x}^T_k \uvec{D} \Big[  \bm{\Phi}( \uvec{A}h_k) \left( \uvec{A} \uvec{x}_k+ \uvec{B}u_k \right)  \Big].
    \label{Eq:derivativeenergy}
\end{align}}
Then, we propose a Lyapunov function candidate as follows
\begin{equation}
    V=\frac{1}{2}k_E{E}^2+\frac{1}{2}k_D(\dot{\theta}-\dot{\theta}_d)^2+\frac{1}{2}k_P(\theta-\theta_d)^2,
    \label{Eq:Lyapunovfunc}
\end{equation}
where $\{\theta_d,\dot{\theta}_d\}$ are the desired angular position and velocity of the motor shaft. In (\ref{Eq:Lyapunovfunc}), the first term is for converging the overall energy to zero while the other two terms converge the system to desired position and velocity. 
It is clear that the two last terms of the Lyapunov function are bounded based on the desired states and the energy term {\small $E^2$} is always upper bounded if the control input $u_k<u_{sat}$ is constrained with $u_{sat}$. 
Then, the derivative of the Lyapunov function (\ref{Eq:Lyapunovfunc}) is
\begin{equation}
    \dot{V}(t)=k_E E \dot{E} + k_D(\dot{\theta} - \dot{\theta}_d)\ddot{\theta} + k_P(\theta - \theta_d)\dot{\theta},
    \label{Eq:DeriLyapunovfunctime}
\end{equation}
Based on the time domain Lyapunov function (\ref{Eq:DeriLyapunovfunctime}), \makehighlight{we propose a discretized approximated function $V'$ by using first-order forward derivative for the $k$-th sample as follows}
\makehighlight{
\begin{align}
    &V'\approx \frac{1}{h_k}(V_{k+1}-V_{k})=  \frac{1}{h_k} k_E E_k \left(E_{k+1}-E_k\right) \nonumber\\
    &+ \frac{k_D}{h_k} \left(  \frac{1}{h_k} (\theta_{k+1}-\theta_k)-\dot{\theta}_d\right) \left(\dot{\theta}_{k+1}-\dot{\theta}_k\right) \nonumber\\ &+\frac{k_P}{h_k}\left(\theta_k-\theta_d \right)\left(\theta_{k+1}-\theta_k\right) ,
    \label{Eq:DeriLyapunovfunc}
\end{align}
}

\noindent where $E_k$ is the discretized formula of energy formula $E$ in (\ref{Eq:Energyequation1}). By substituting (\ref{Eq:Discretemodelstate}), (\ref{Eq:DiscereteFkGK}) and (\ref{Eq:derivativeenergy}), the re-ordered function in discrete domain, based on the (\ref{Eq:maclurtianexponetialexp})-(\ref{Eq:Relationmultilast}) relations and discrete velocity updates $\dot{\theta}_k=(1/h_k)(\theta_{k+1}-\theta_{k})$, can be derived as
\begin{align}
    &V'= k_E E_k \; \Big [\uvec{x}_k^T\uvec{D}  \bm{\Phi}(\uvec{A}h_k)\left(\uvec{A}\uvec{x}_k+\uvec{B}u_k \right) \Big ] \nonumber \\
    & +\frac{k_D}{h_k}\left(\dot{\theta}_k-\dot{\theta}_d\right) \left( \uvec{F}_m \uvec{x}_k-\dot{\theta}_k \right)+k_P\left( \theta_k-\theta_d\right)\dot{\theta}_k
    \label{Eq:DeriLyapunovfuncHybrid}
\end{align}
%where $\uvec{F}_m$ is the sub-matrix of $\uvec{F}$ that belongs to the model of rotational body of the DC motor. 
% -------------------------------------- Stability

The new function candidate $V'$ should be stable around equilibrium with satisfying LaSalle's theorem. It can be seen, the function in Eq. (\ref{Eq:Lyapunovfunc}) is positive definite, thus, we can have following condition for stable convergence
\begin{equation}
  V'(\uvec{x}_k) \leq 0, \; \; \; \forall\uvec{x}_k \in \mathbb{R}^3
  \label{Eq:Lyapanouvconditionstab}
\end{equation}
To prove the stability condition (\ref{Eq:Lyapanouvconditionstab}) for (\ref{Eq:DeriLyapunovfuncHybrid}), we represent the following new terms $\uvec{A}^*=-\uvec{A}$ and $\uvec{F}^*_m=-\uvec{F}_m$. Then, the inequality (\ref{Eq:Lyapanouvconditionstab}) can be computed as follows
\begin{align}
    &k_E E_k \left[ \uvec{x}_k^T\uvec{D} \bm{\Phi}(\uvec{A}^*h_k)\uvec{B}u_k \right]   \nonumber \\
    &+k_P\left( \theta_k-\theta_d\right)\dot{\theta}_k \leq k_E E_k \; \uvec{x}_k^T\uvec{D} \bm{\Phi}(\uvec{A}^*h_k)\uvec{A}^*\uvec{x}_k   \nonumber\\
    &+\frac{k_D}{h_k}\left(\dot{\theta}_k-\dot{\theta}_d\right) \left(\uvec{F}^*_m \uvec{x}_k-\dot{\theta}_k\right)
    \label{Eq:DeriLyapunovreordered}
\end{align}
Inequality (\ref{Eq:DeriLyapunovreordered}) clarifies the constraints on control gains and sampling time $h_k$ in establishing a stable controller. \makehighlight{ Based on the transitivity and addition properties of the inequalities we can divide the inequality to two parts as:
\begin{align}
    &V_1:k_P\left( \theta_k-\theta_d\right)\dot{\theta}_k - \frac{k_D}{h_k}\left(\dot{\theta}_k-\dot{\theta}_d\right) \left(\uvec{F}^*_m \uvec{x}_k-\dot{\theta}_k\right) \leq 0 \nonumber \\
    &V_2:\uvec{B}u_k  +\uvec{A}\uvec{x}_k \leq 0
\end{align}
Because the system is always convergent to the desired states, $V_2$ is always satisfied. Also, $V_1$ with normalized parameters and constant $I$ and $\theta$ can be shown as Fig. \ref{Fig:3DSability1} where it is clear that system should stay in a stable region if $k_P \gg k_D$ chosen large values when $k_D<1$. Also, the system stability becomes harder in small sampling times $h_k$ when $k_P$ is not that high. }
 \begin{figure}[t!]
    \centering
    \vspace{3mm} %5mm vertical space
    \includegraphics[width=1.9 in]{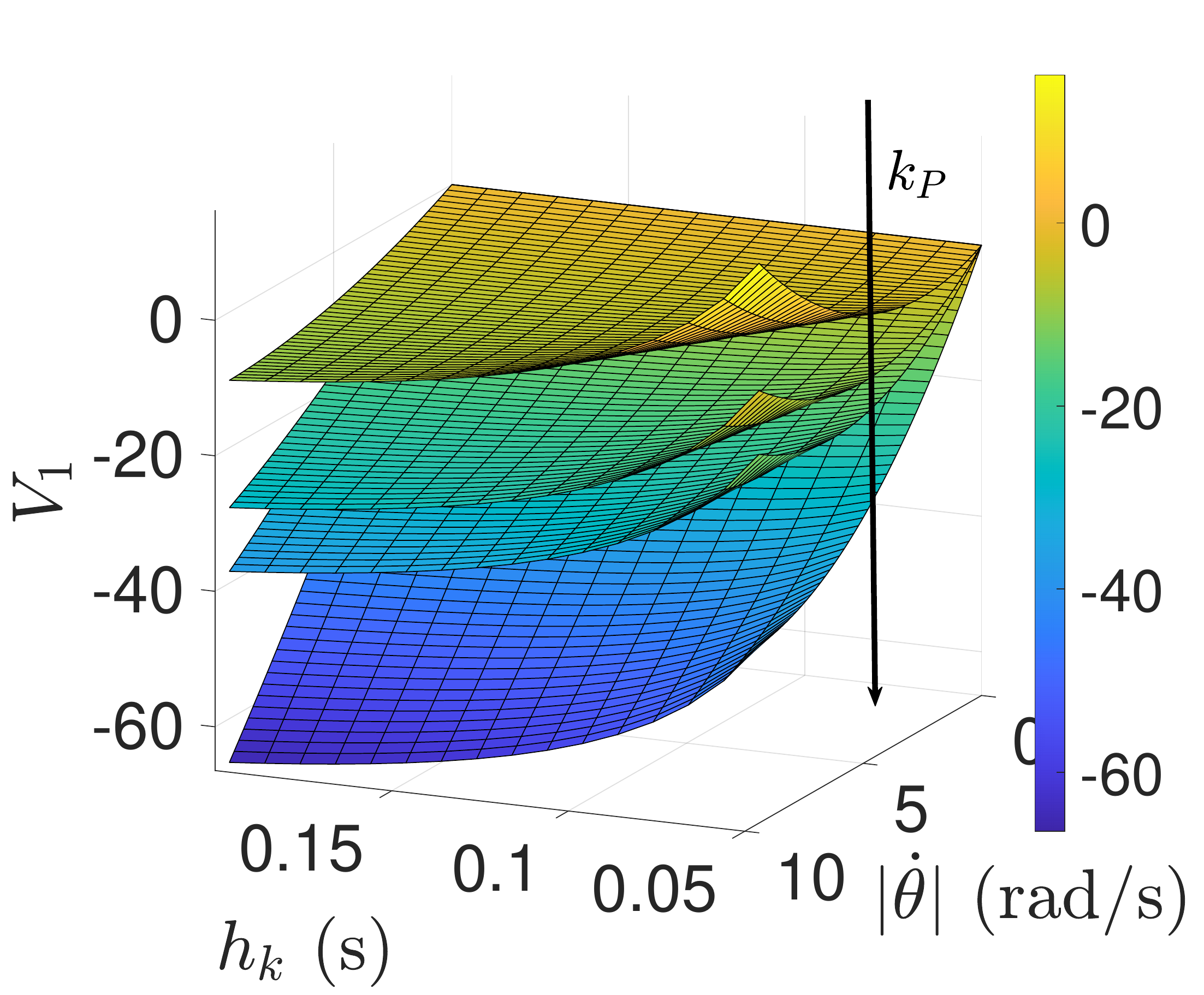}
	\caption{\makehighlight{Stability region variation with respect to sampling time $h_k$ and angular velocity $|\dot{\theta}|$, when $k_P$ is changing from a larger to smaller values.}}
	\label{Fig:3DSability1}
\end{figure}

The stability regarding the sampling time $h_k$ from (\ref{Eq:DeriLyapunovreordered}) can be simplified to two boundary cases of 
\begin{align}
\begin{cases}
   \begin{split}
    &0  \leq 
    k_D\left(\dot{\theta}_k-\dot{\theta}_d\right)  \left(\uvec{F}^*_m \uvec{x}_k-\dot{\theta}_k\right)  \end{split} \hspace{.8 in} h_k\rightarrow0 \\
    &\uvec{B}u_k  \leq \uvec{A}^*\uvec{x}_k \hspace{1.9 in} h_k \rightarrow \infty \\
    \end{cases}
     \label{Eq:DeriLyapunovreorderedlowhintext}
\end{align}
Thus, the controller gains should be satisfied by inequalities (\ref{Eq:DeriLyapunovreorderedlowhintext}) based on the range of change for the sampling time $h_k$ to always stay in the stable region. 
%-----------------------------------------------------------------
\makehighlight{Finally, by equating (\ref{Eq:DeriLyapunovfuncHybrid}) to zero, the energy-based control input (\ref{Eq:ControllerinputEPropo}) can be found.}
\end{proof}

\begin{rem} The developed controller input (\ref{Eq:ControllerinputEPropo}) has singularities which are located in two conditions 
\begin{align}
\label{Eq:singularfirstcondition}
    &h_k \leq \varepsilon_h\\
    &\frac{1}{2}\uvec{x}^T_k\uvec{D}\uvec{x}_k  =\left(J\dot{\theta}_k^2+K\theta^2_k+LI^2_k\right)\leq \varepsilon_c
    \label{Eq:singularsecondcondition}
\end{align}
\makehighlight{where $\varepsilon_h$ and $\varepsilon_c$ are constant small values. These conditions should always be satisfied to prevent the controller from becoming saturated. The condition (\ref{Eq:singularfirstcondition}) shows that the lower boundary of the sampling time is determined by $\varepsilon_h<h_k$ where $\varepsilon_h$ is a constant small value.} Also, if the sum of the squared angular displacement, velocity and motor current, as shown in the condition (\ref{Eq:singularsecondcondition}), becomes very small (near zero) where $\varepsilon_c$ is a very small constant value, there appear singularities. Also, (\ref{Eq:singularsecondcondition}) imposes another constraint to geometric parameters of the physical system i.e., $J$, $K$ and $L$, that cannot be lower than a certain value.
\end{rem}

To provide complete stability through mode switching in sampling time and desired states, aforementioned conditions should be satisfied and control variables should be designed accordingly. As we expect to have a random sampling time $h_k \in [h_{min}, h_{max}]$ the model (\ref{Eq:Discretemodelstate}) should always be stable; hence, the minimum $h_{min}$ and the maximum $h_{max}$ sampling time can be determined from (\ref{Eq:DeriLyapunovreorderedlowhintext}) (as shown in Fig. \ref{Fig:3DSability1}) based on the physical parameters of the considered model.

Next, we develop a tuning dynamic $k_E$ gain to keep the system in smooth and robust convergence while stability is satisfied with randomly changing sampling time. This issue is important because random sampling will act as disturbance to whole system dynamics and can create a divergence in the states. The first term of the Lyapunov function $k_E E_k E'(h_k)$ in (\ref{Eq:DeriLyapunovfunc}) is an important part of the controller that is a function of $h_k$. Also, as the sampling time $h_k$ variable changes, $E'$ function varies that can create instabilities in controller output $u_k$. To solve this issue, we assume that the controller is designed with the standard sampling time $h_s$
which results in a standard energy difference of 
%where results in to have standard energy change as follows - this did not read well, check above is what was meant
\begin{equation}
E'_s=k_{E,s} E'(h_s),
\end{equation}
where $k_{E,s}$ is the standard constant energy gain for the standard sampling $h_s$. Then, by considering an arbitrary sampling $h_k$, $k_E$ is updated as follows:
\begin{equation}
    k_E(h_k)= E'_s/E'(h_k)+K_c,
    \label{Eq:SamplingKeTune}
\end{equation}
where $K_c$ is a constant small value. Eq.~(\ref{Eq:SamplingKeTune}) prevents the controller from producing large control values when the sampling time shifts suddenly with large deviations. \makehighlight{The stability of the newly developed dynamic gain can be found by substituting (\ref{Eq:SamplingKeTune}) to (\ref{Eq:DeriLyapunovreordered}) that results
\begin{align}
    &k_E \left(E'_s/E'(h_k)+K_c\right)  \left[ \uvec{x}_k^T\uvec{D} \bm{\Phi}(\uvec{A}^*h_k)\uvec{B}u_k \right]   \nonumber \\
    & \leq k_E \left(E'_s/E'(h_k)+K_c\right)  \; \uvec{x}_k^T\uvec{D} \bm{\Phi}(\uvec{A}^*h_k)\uvec{A}^*\uvec{x}_k   \nonumber\\
    &+\frac{k_D}{h_k}\left(\dot{\theta}_k-\dot{\theta}_d\right) \left(\uvec{F}^*_m \uvec{x}_k-\dot{\theta}_k\right),
    \label{Eq:DeriLyapunovreodynamicK}
\end{align}
which keeps the system in a stable region.  When the $k_D$ term becomes dominant in smaller values of $h_k$ as shown in (\ref{Eq:DeriLyapunovreorderedlowhintext}), the dynamic $k_E$ gets larger to create a trade off and keeping the condition (\ref{Eq:DeriLyapunovreodynamicK}) always valid.  }

 \begin{figure}[t!]
    \centering
    \vspace{3mm} %5mm vertical space
    \includegraphics[width=2.8 in]{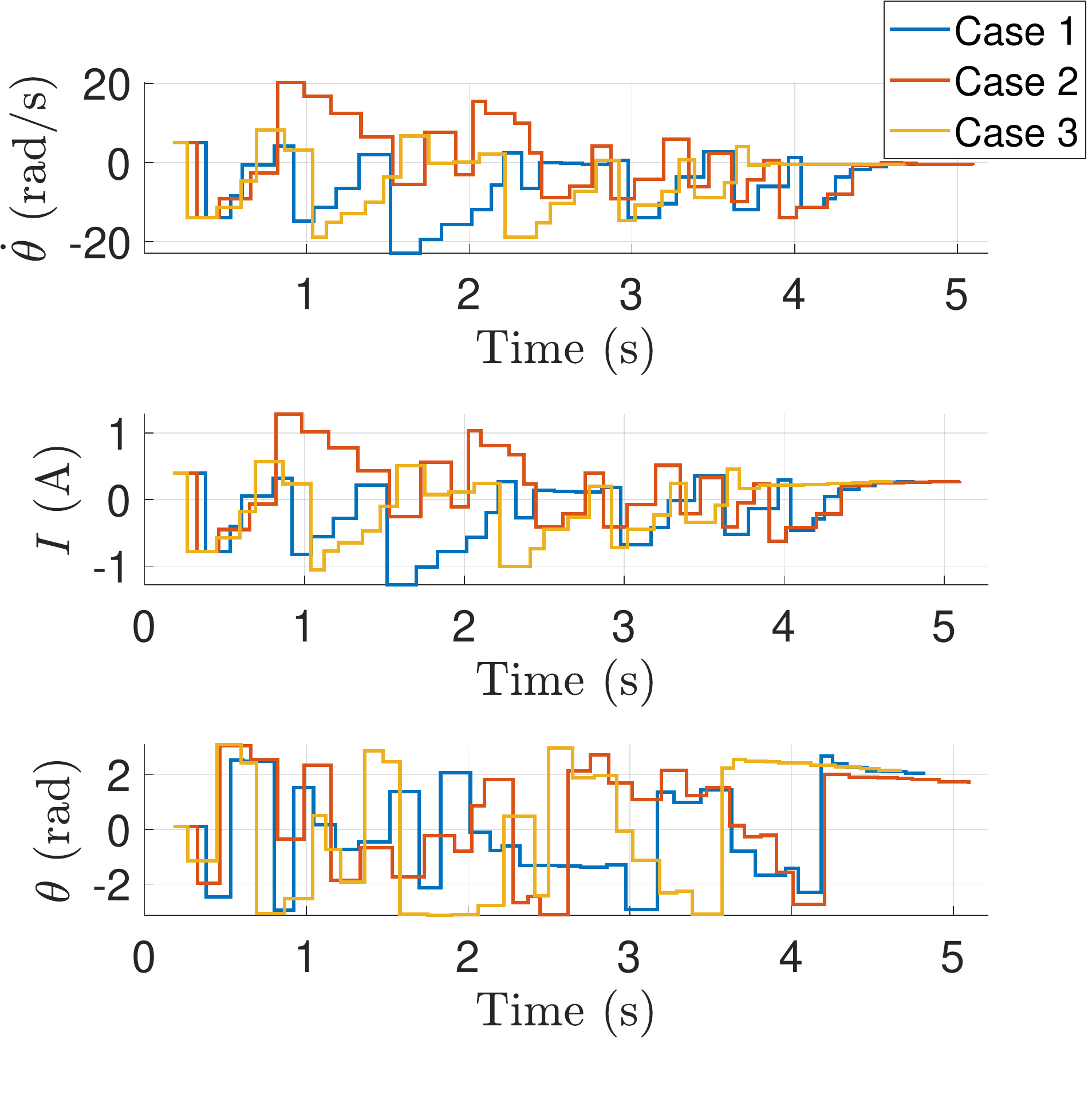}
	\caption{\makehighlight{Simulation results of the system states.}}
	\label{Fig:DiscreteResu1}
\end{figure}
% -------------------------------------------------------------------------------------------------------
% Scheduling analysis
% \input{contents/rta_old}

% -------------------------------------------------------------------------------------------------------
\section{Evaluation} \label{sec:evaluation}
\begin{figure}[t!]
\centering
\includegraphics[width=2.8 in]{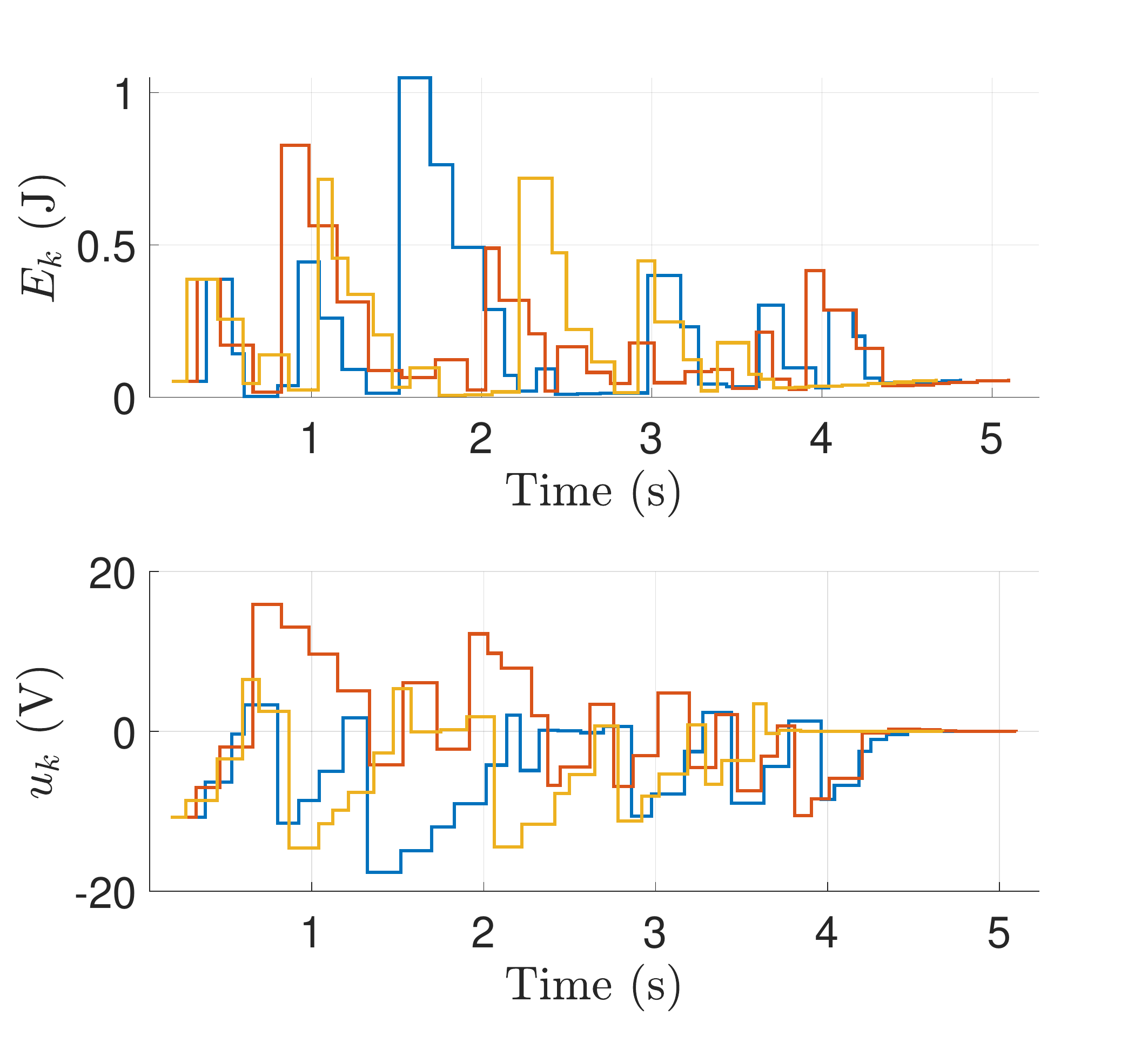}
\caption{\makehighlight{Overall energy (top) and controller inputs (bottom).}} 
\label{Fig:DiscreteResu2}
\end{figure}
In this section, we study the proposed controller's behavior for the DC motor model using simulation.
In order to demonstrate the behavior, we implement the introduced energy-based controller in MATLAB. Table~\ref{Tab:Parametervalues} shows the used parameters for this experiment. 
The motor is assumed to have initial state $\{\theta_0,\dot{\theta}_0,I_0\}=\{0.1,5,0.4\}$ and the system is expected to converge to  the rest state $\{\dot{\theta}_d,I_d\}=\{0,0\}$ while the position arrives at its desired value $\theta_d=2$ rad.
\begin{table}[b!]
	\caption{Parameters of the DC motor and controller.}
	\label{Tab:Parametervalues}
	\centering
		\begin{tabular}{|cc|cc|}
			\hline
			Variable & Value & Variable & Value\\
			\hline
			$J$ & 0.004 kg-m$^2$ & $B$ & 0.04 Nm-s/rad \\
			$R$ & 1.3 \textnormal{$\varOmega$}& $L$ & 1 mH \\
			$K_b$ & 0.5 & $K_m$ & 0.5\\
			$K$  & 0.7 N/m & $k_P$ & 565 \\
			$k_{E,s}$ & 725 & $k_D$  & 0.07  \\
			$h_s$ & 0.11 & $u_{sat}$ & 45 V\\
			$K_c$& 610 & $K_L$& 0.4 \\
			\hline
	\end{tabular}
\end{table}
Based on the constraints for sampling period $h_k$ in (\ref{Eq:DeriLyapunovreorderedlowhintext}) and (\ref{Eq:singularfirstcondition})-(\ref{Eq:singularsecondcondition}), the random sampling range is chosen as $h_k\in[h_{min},h_{max}]=[0.05,0.2]$ s. It is important to note that in this arbitrary sampling for each iteration $h_k$, the number of times that sampling time changes is also random. The controller gains are also determined under the stability conditions with values given in Table~\ref{Tab:Parametervalues}.

  \begin{figure}[t!]
    \centering
    \vspace{3mm} %5mm vertical space
        \includegraphics[width=2.2 in]{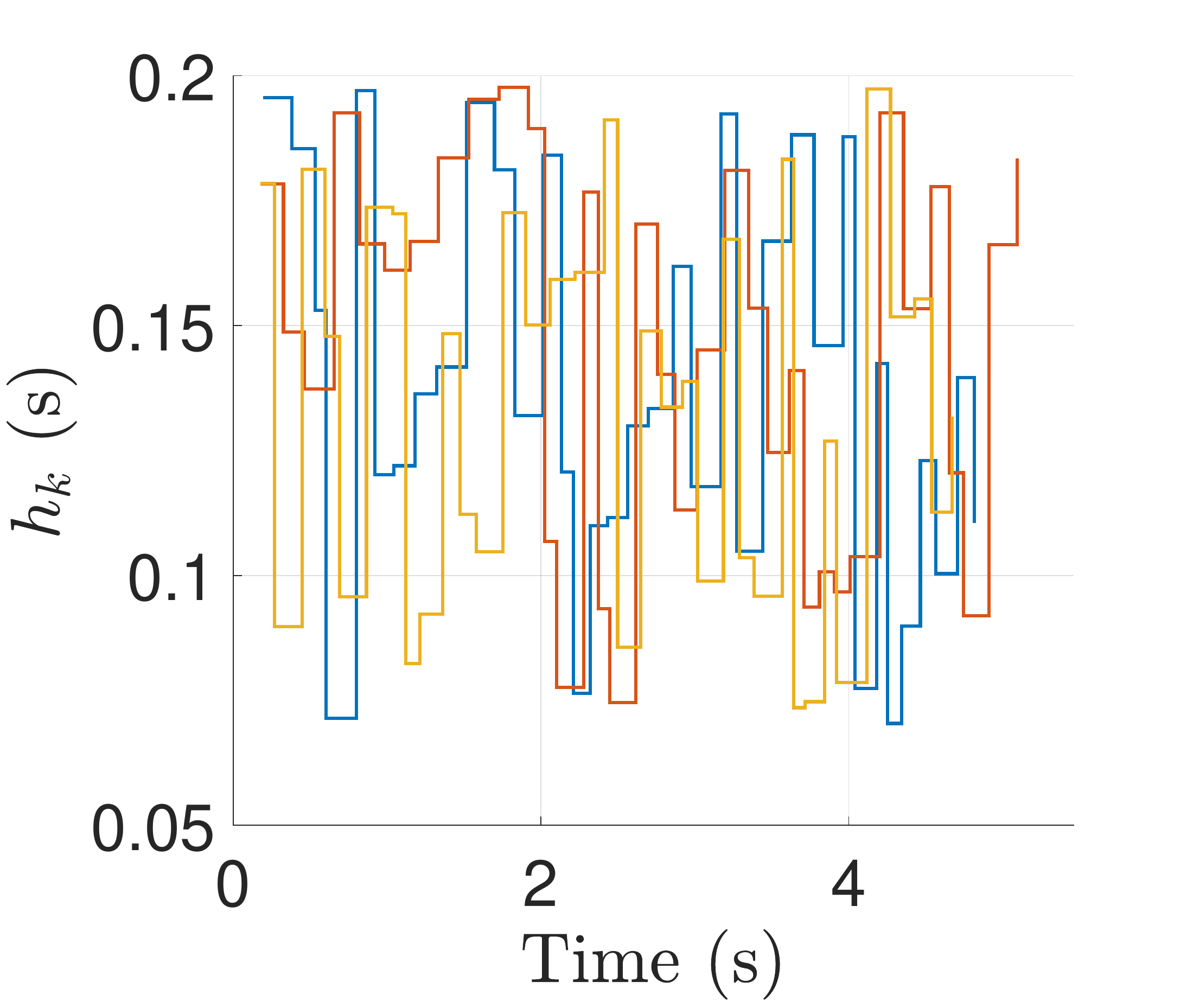}
	\caption{\makehighlight{Non-uniform random sampling.}} 
	\label{Fig:DiscreteResu4}
\end{figure}

\makehighlight{Figs.~\ref{Fig:DiscreteResu1}-\ref{Fig:DiscreteResu2} illustrate the simulation results for three different runs of random sampling. The developed discrete energy-based controller can successfully converge the system to desired states while sampling time is randomly altered for all cases (see Fig. \ref{Fig:DiscreteResu4} for details). }

%Also, the convergence happens with 10-12$\%$ overshoot (around $t=1 $ s) for $\{\dot{\theta},I\}$ states.
Fig.~\ref{Fig:DiscreteResu2} presents how the energy is minimized towards the desired states convergence. \makehighlight{The same behavior is observed for the controller inputs without having any saturation.}

\begin{figure}[t!]
\centering
\vspace{3mm} %5mm vertical space
    \includegraphics[width=2.6 in]{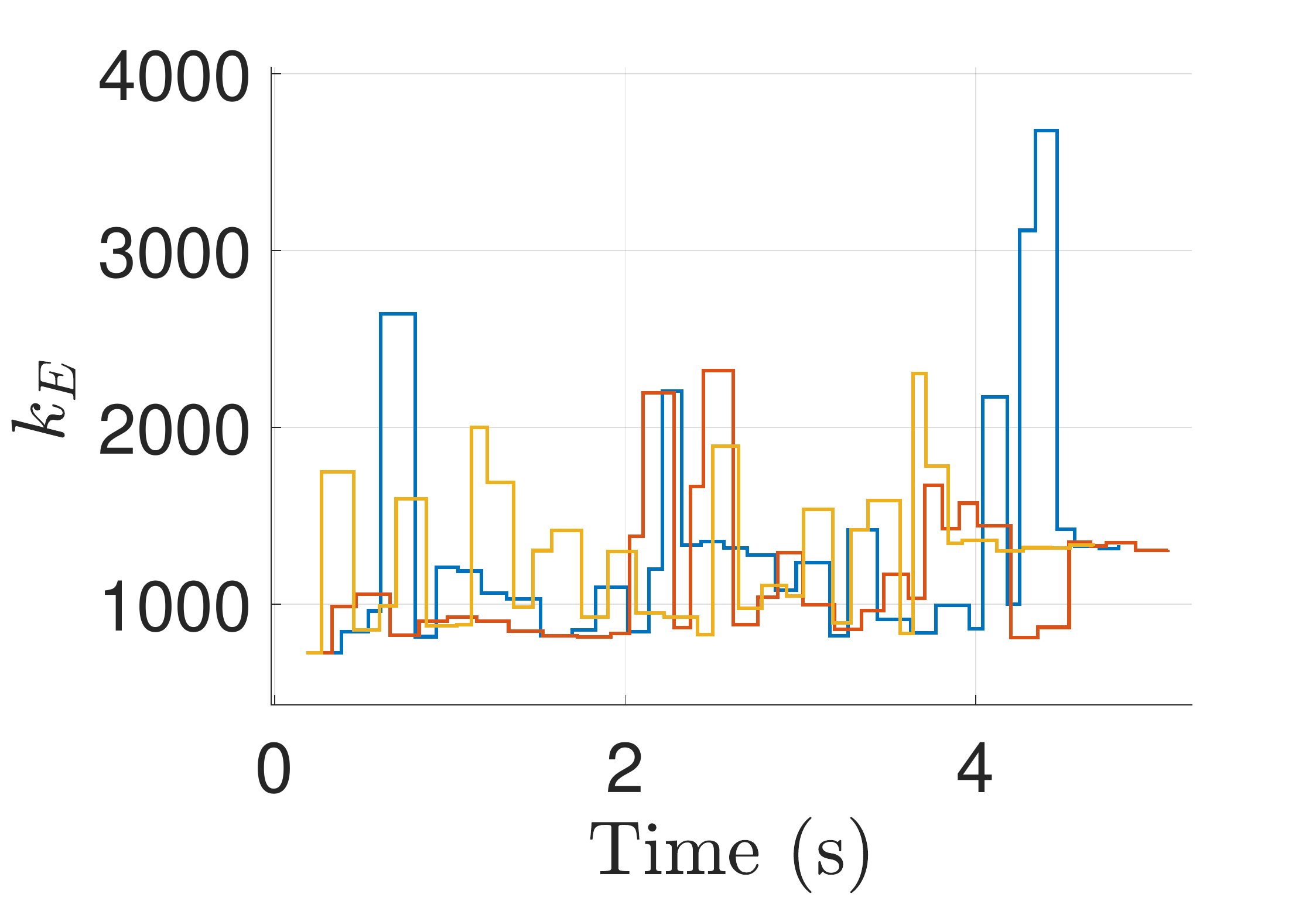}
\caption{\makehighlight{The controller $k_E$ gain tuning results.}} 
\label{Fig:DiscreteResu3}
\end{figure}
%On the other hand, the third case (the yellow solid line) presents an important point during the steady state. The states has small divergences from stabilised steady states (see Fig.~\ref{Fig:DiscreteResu1}) around time $t\in[3,4]$ s. The source can be found from the random sampling period $h_k$ in Fig.~\ref{Fig:DiscreteResu4} which acts as a disturbance to the system. It is clear that the sampling time in the third case (yellow solid line), there are multiple occasions that sampling time has sharp changes from very high value to low (after 1 sec) which creates divergence in angular state $\theta(t)$. This huge spike (change of sampling time from $0.17 s$ to $0.06 $ s) happens strongly around time $3 $ s (the yellow solid line). This issue results in the controller to diverge and the change in sampling period acts as a disturbance towards the system. This indicates that not only the condition (\ref{Eq:DeriLyapunovreorderedlowh}) is essential especially when the $h_k \rightarrow 0 $ happens, but also the system becomes unstable when $h_k$ is near zero and there are multiple large changes in the value of $h_k$. However, from a certain time around time 3.5 s, the controller put back the angular state to desired value and settles in all states. Thus, this shows certain limitations in the current controller when especially following case happens. This issue requires more insightful study in the future. 

\begin{figure}[t!]
    \centering
    \vspace{3mm} %5mm vertical space
        \includegraphics[width=2.8 in]{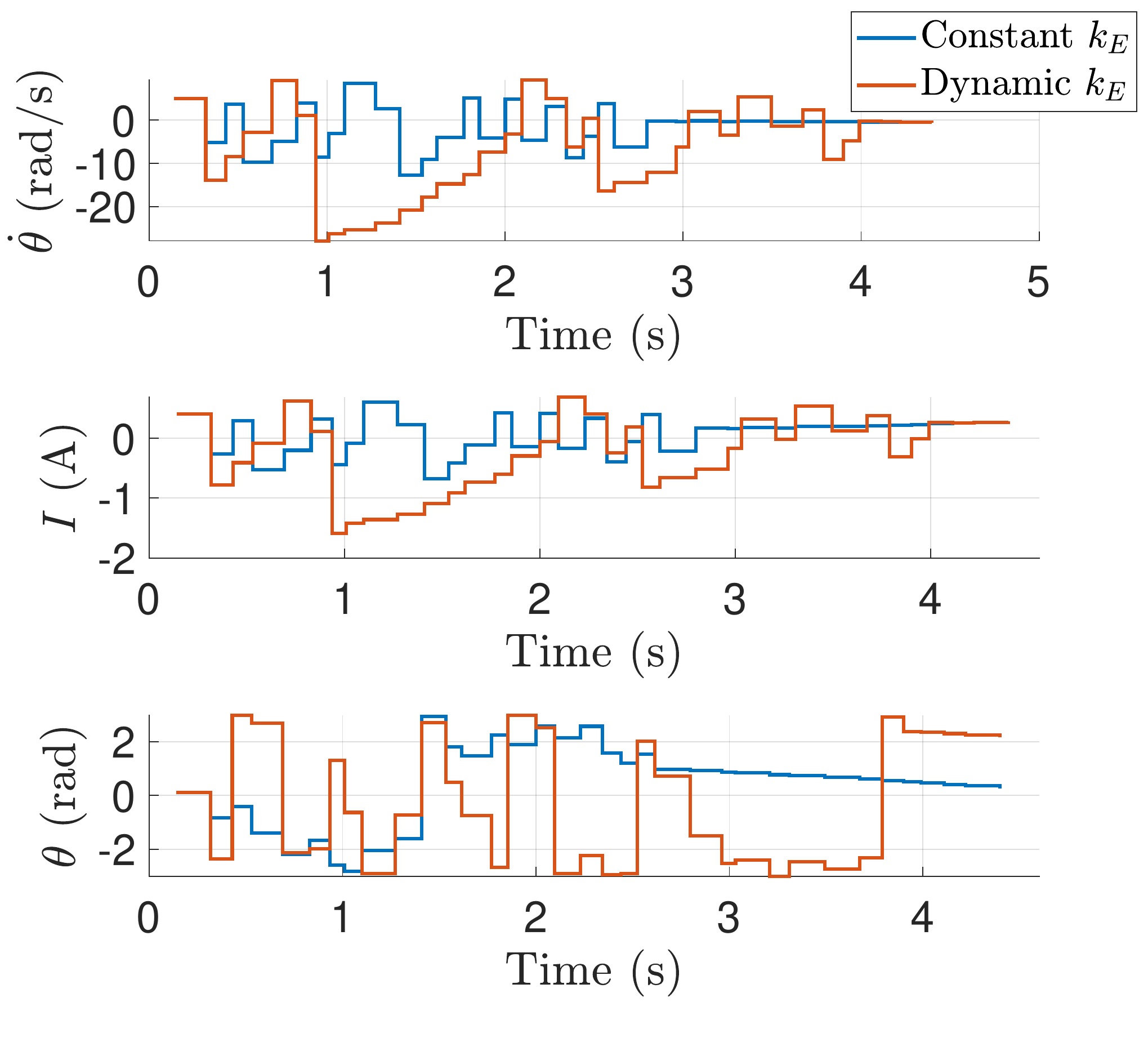}
        	\caption{\makehighlight{A comparison of constant energy gain and the proposed dynamic gain.}} 
	\label{Fig:DiscreteResu4Com}
\end{figure}

\begin{figure}[t!]
    \centering 
    \vspace{3mm} %5mm vertical space
        \includegraphics[width=1.8 in]{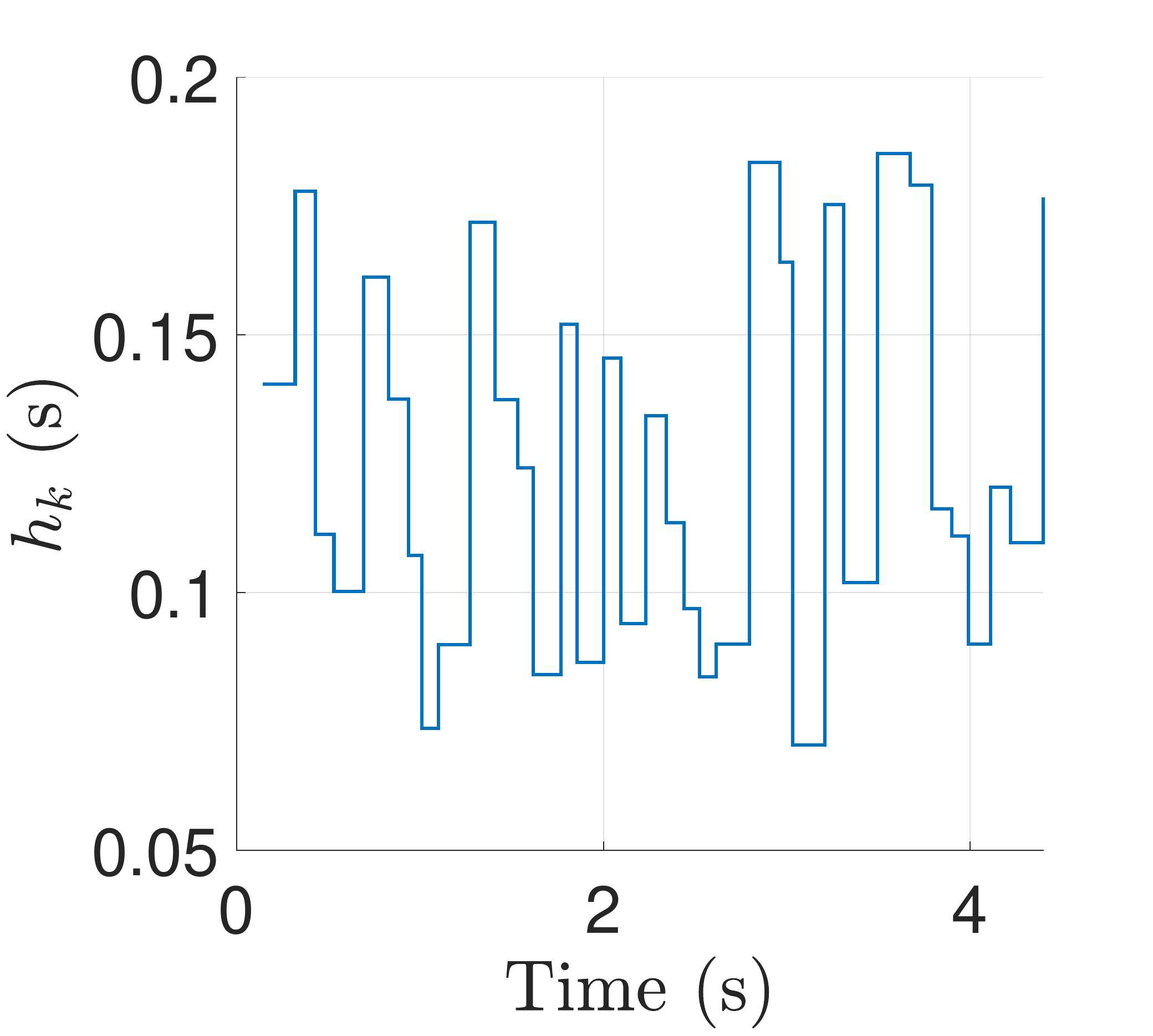}
         \includegraphics[width=1.9 in]{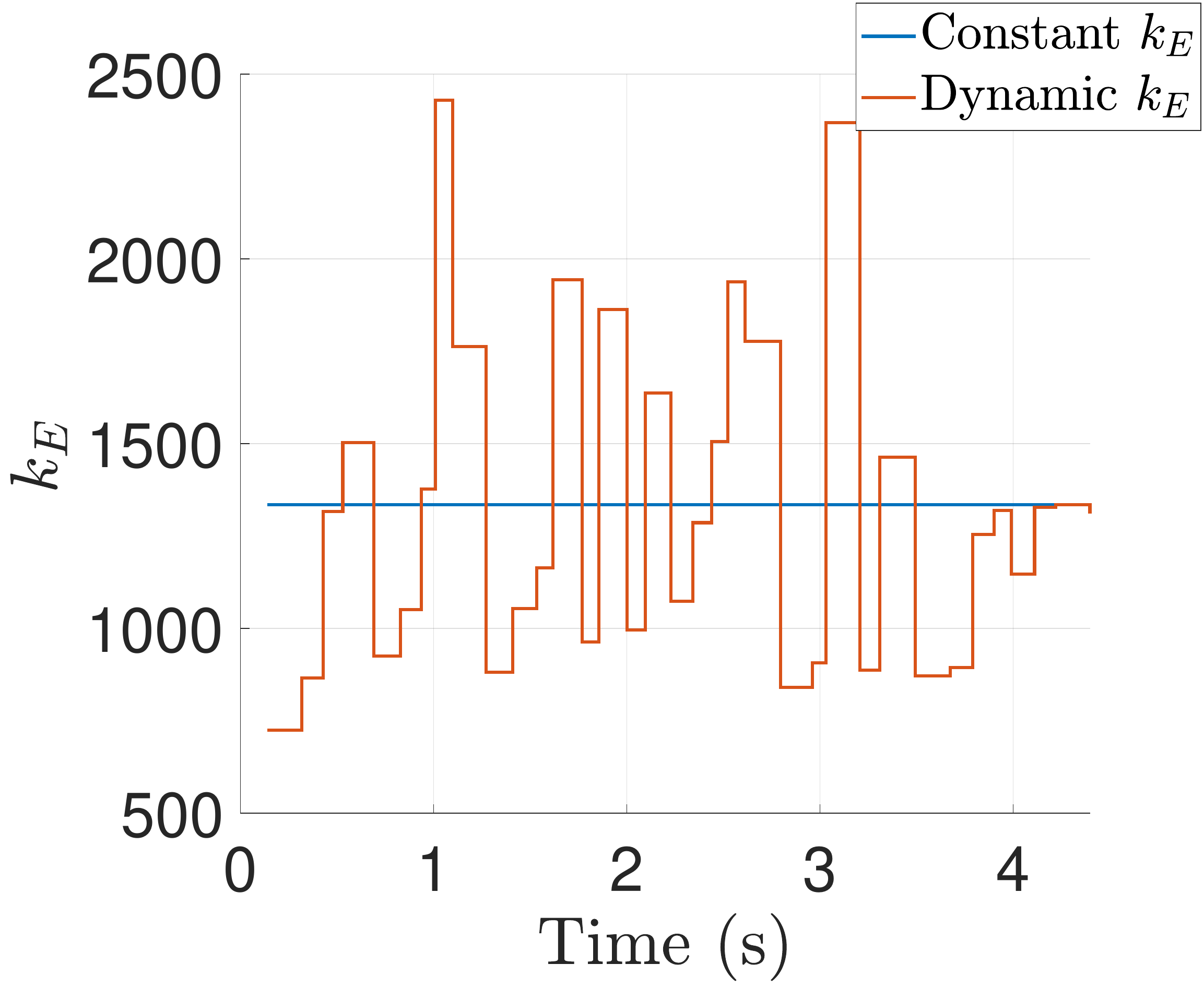}
        	\caption{\makehighlight{The non-uniform random sampling and controller gain $k_E$ results for constant and dynamic $k_E$.}} 
	\label{Fig:CompConVarySamcase513}
\end{figure}

Additionally, during the transition state, the controller is prone to have more fluctuations when the sampling time changes more often with large values. This can be observed clearly for the second random sampling case (the plot in red) where dynamic $k_E$ gain is getting re-tuned by (\ref{Eq:SamplingKeTune}) as a large value before $1 $ s (see Fig.~\ref{Fig:DiscreteResu3}). This large value of $k_E$ prevents the controller input from saturation which shows the developed controller works flawlessly.

To demonstrate that the controller with dynamic $k_E$ gain has important advantages, we make a comparison with a case where the gain $k_E$ is assigned a constant value. Fig.~\ref{Fig:DiscreteResu4Com} illustrates an example result of system convergence with random sampling time presented in Fig.~\ref{Fig:CompConVarySamcase513}. 
It is clear that our designed controller with dynamic gain $k_E$ can converge all the states to desired values but $k_E$ with constant gain fails to arrive at desired angular state $\theta_f$, which is 2 rad.
%Also, the overshoot of all states during the transient state is smaller for dynamic $k_E$ gain, in particular $\{\dot{\theta},I \}$. 

% -------------------------------------------------------------------------------------------------------
\section{Conclusion} \label{sec:conclusion}
In this paper, we have introduced a discrete energy-based controller that supports non-uniform digital control, which contributes to the research of control and scheduling co-design.
In this work, before developing the controller, the discrete state-space model was presented. A physical system (a DC motor) with elastic load and its model were introduced. Then, the energy-based controller with arbitrary switching was proposed where constraints under the stability condition of the overall Lyapunov function are found for controller gains. A new re-tuning function based on the sampling period is developed for the gain of the energy term to stabilize during the flexible switching. This work shows that the designed discrete energy-based controller can successfully converge the system into the desired states while the sampling time is randomly changing. 

%This work shows some preliminary research including problem formulation, controller design and stability, and some results are shown using simulation. The chosen physical system, served as an example, is a DC motor model, which is commonly seen in engineering systems. 

In addition to what was presented in this paper, some further improvements can be made, including: 
%(i) finding the optimal parameters of the controller and the sampling interval;
(i) introduce a switch supervisor that chooses when and how to switch between multiple periods;
(ii) constrain the periods to a pre-defined set with a harmonic period to facilitate task scheduling and allocation; and (iii) explore other control strategies to extend the problem to digital systems without any physical properties, which was required in this energy-based controller. 
All of these will be form part of future work.
%Also, we plan to apply this method to a DC motor with elastic load \cite{Ujjal2021,ujjal2021WAC} to evaluate the performance.
%\section{Acknowledgments}
%This work was supported by JSPS KAKENHI grant number JP21K20391.
%and Japan Science and Technology Agency (JST) [Moonshot R$\&$D Program] under Grant JPMJMS2034. 
% -------------------------------------------------------------------------------------------------------
\section{Acknowledgment}
This work was supported by JSPS KAKENHI grant number JP21K20391 and partially by Japan Science and Technology Agency (JST) [Moonshot R$\&$D Program] under Grant JPMJMS2034. 

\bibliographystyle{IEEEtran}
\bibliography{references}

\end{document}